\newtheorem{theorem}{\bf Theorem}
\newtheorem{lemma}{\bf Lemma}
\newlength{\aligntop}
\newlength{\alignbot}
\renewenvironment{align}{%
  \vspace{\aligntop}
  \start@align\@ne\st@rredfalse\m@ne
}{%
  \math@cr \black@\totwidth@
  \egroup
  \ifingather@
    \restorealignstate@
    \egroup
    \nonumber
    \ifnum0=`{\fi\iffalse}\fi
  \else
    $$%
  \fi
  \ignorespacesafterend%
  \vspace{\alignbot}\par\noindent
}
\begin{document}
\title{\huge Optimal Transport Theory for Power-Efficient\\ Deployment of Unmanned Aerial Vehicles \vspace{-0.4cm}}

\author{\authorblockN{ Mohammad Mozaffari$^1$, Walid Saad$^1$, Mehdi Bennis$^2$, and M\'erouane Debbah$^3$} \authorblockA{\small
$^1$ Wireless@VT, Electrical and Computer Engineering Department, Virginia Tech, VA, USA, Emails: \url{{mmozaff , walids}@vt.deu}\\
$^2$ CWC - Centre for Wireless Communications, Oulu, Finland, Email: \url{bennis@ee.oulu.fi}\\
$^3$ Mathematical and Algorithmic Sciences Lab, Huawei France R \& D, Paris, France, Email:\url{merouane.debbah@huawei.com}\\
 }%
 \thanks{This research was supported by the U.S. National Science Foundation under Grant AST-1506297.}
}
%\author{Mohammad Mozaffari, Walid Saad, Mehdi Bennis and Merouane Debbah}
%$^2$ Electrical and Computer Engineering Department, University of Houston, Houston, TX, USA, Email: \url{zhan2@mail.uh.edu}\\
%$^3$ Electrical Engineering Department, Princeton University, Princeton, NJ, USA, Emails: \url{poor@princeton.edu}\\
%$^4$ Coordinated Science Laboratory, University of Illinois at Urbana-Champaign, IL, USA, E-mail: \url{basar1@illinois.edu}\vspace{-1cm}
% }%
%   \thanks{This material is based upon work supported by the DTRA under Grant No. HDTRA1-07-1-0037 and is also supported by NSF projects CNS-0910461, CNS-0953377, CNS-0905556, and ECCS-1028782.}
% }
%\date{}
\maketitle

\begin{abstract}
In this paper, the optimal deployment of multiple unmanned  aerial vehicles (UAVs) acting as flying base stations is investigated. Considering the downlink scenario, the goal is to minimize the total required transmit power of UAVs while satisfying the users' rate requirements. To this end, the optimal locations of UAVs as well as the cell boundaries of their coverage areas are determined. To find those optimal parameters, the problem is divided into two sub-problems that are solved iteratively. In the first sub-problem, given the cell boundaries corresponding to each UAV, the optimal locations of the UAVs are derived using the facility location framework. In the second sub-problem, the locations of  UAVs are assumed to be fixed, and the optimal cell boundaries are obtained using tools from optimal transport theory. The analytical results show that the total required transmit power is significantly reduced by determining the optimal coverage areas for UAVs. These results also show that, moving the UAVs based on users' distribution, and adjusting their altitudes can lead to a minimum power consumption. Finally, it is shown that the proposed deployment approach, can improve the system's power efficiency by a factor of 20$\times$  compared to the classical Voronoi cell association technique with fixed UAVs locations.\vspace{-0.1cm}
\end{abstract}

\section{Introduction}\vspace{-0.01cm}
Unmanned aerial vehicles (UAVs) can be used as aerial base stations in order to satisfy the  coverage and rate requirements of wireless users \cite{Bucaille} and \cite{zhan}. Due the flying nature of UAVs, they can have line-of-sight (LOS) connections towards ground users thus leading to an improved coverage and rate performance. Compared to terrestrial base stations, mobile UAVs can intelligently move and change their location to provide on-demand coverage for ground users. As a result, UAV-based aerial base stations can be used to boost the wireless capacity and coverage at temporary events or hotspot areas. Effectively deploying UAVs requires meeting key challenges such as air-to-ground channel modeling, optimal deployment, path planning, and energy-efficient operation \cite{Uragun}.
  
The majority of the literature on UAV communications has mostly focused on empirical and analytical studies on air-to-ground channel modeling \cite{HouraniModeling, FengPath, Holis}. For example, in \cite{HouraniModeling}, the probability of LOS for air-to-ground communication was derived as a function of the elevation angle for different environments. The path loss model for the air-to-ground channel was further investigated in\cite{FengPath} and \cite{Holis}. Due to the existence of both path loss and shadowing, the characteristics of the air-to-ground channel are shown in \cite{Holis} to significantly depend  on the altitude of the aerial base stations.

To investigate UAV deployment, the work in \cite{HouraniOptimal} derives the optimal altitude for a single static UAV which yields a maximum coverage range. However, in this work, the authors did not consider the case of multiple UAVs nor the potential mobility of a UAV. In \cite{Mozaffari}, we investigated the impact of altitude on the minimum required transmit power for the case of two UAVs. However, our previous work is restricted to two UAVs and does not consider the impact of users' distribution on the deployment of UAVs.

Beyond deployment issues, the energy consumption of UAVs is also an important challenge. In fact, aerial base stations have a very limited amount of energy which must be used for transmission and mobility purposes \cite{Uragun}. In \cite{Zorbas}, the authors studied the energy efficiency of drones in target tracking scenarios by adjusting the number of active drones. However, in this work, the authors assumed that the location of targets is known in advance and they did not take into account the possible randomness of the network. The work in \cite{Ceran} investigated an optimal resource allocation scheme for an energy harvesting flying access point. However, the authors in \cite{Ceran} did not consider the case of multiple access points or UAVs. Furthermore, in their model, rate requirements are not guaranteed for all ground users.
Indeed, none of the previous studies \cite{Uragun,HouraniModeling} and \cite{Mozaffari,HouraniOptimal,Zorbas}, optimized the power of multiple, mobile UAVs by determining the optimal coverage regions of UAVs and their optimal locations while considering the ground users' geographical distribution. 
%We, however, considering the users' distribution over the geographical area, address the problem of joint optimal deployment and cell association for the UAV power-efficiency.   
  
\begin{comment}
%In traditional cellular networks, there are some energy saving techniques such as base station on/off-switching, energy harvesting \cite{dhillonFund}, and optimal cell association \cite{Silva}. However, there is no significant analytical study on the UAV-based aerial base stations. Here, we focus on the optimal cell association to minimize the UAV energy consumption.\end{comment}
 %The studies that are most relevant to our work are \cite{Silva} and \cite{ghazzai} in which the authors addressed the cell association problem in cellular networks using optimal transport theory \cite{Villani2008}. However, the works in \cite{Silva} and \cite{ghazzai}, are limited to a terrestrial network, assume fixed locations of base stations, and do not handle the challenges of UAV communications. 

%Compared these studies, our analysis on the UAV-based aerial base stations is different as follows. We use different channel modeling between the UAVs and users based on the probabilistic LOS links. In addition, unlike the terrestrial base stations with fixed height, we consider an adjustable height for the UAVs. Finally, we exploit the potential mobility of UAVs to improve the power-efficiency. 

The main contribution of this paper is to develop a novel approach for optimally deploying UAVs to provide wireless to service ground users while minimizing the overall UAV transmit power needed to satisfy the users' data rate. We consider multiple UAVs in the downlink scenario and derive, jointly, the optimal cell boundaries (coverage area) and locations of UAVs that minimize the required transmit power. To this end, we first fix the cell boundaries, and solve the \emph{facility location problem}  \cite{Farahani} to determine the optimal locations of UAVs based on users' distribution. Next, given the prospective locations of UAVs, using \emph{optimal transport theory}, a powerful mathematical framework from probability theory \cite{Villani2008}, we find the optimal cell boundaries for the UAVs. Furthermore, we show that, there exist optimal altitudes at which the UAVs can satisfy the users' rate requirement while using minimum total transmit power. To our best knowledge, this is \emph{the first work that proposes a power-efficient deployment for UAVs by exploiting the optimal transport theory and facility location frameworks}. Here, we note that, optimal transport theory was used for resource allocation in traditional cellular networks in \cite{Silva} and \cite{ghazzai}. However, the results in these works do not extend to UAV scenarios and they do not deal with the efficient deployment problem. Compared to these studies, our analysis on  UAV-based aerial base stations is significantly different. First, we use different channel modeling between the UAVs and users based on probabilistic LOS links. In addition, unlike terrestrial base stations with fixed height, we consider an adjustable height for UAVs. Finally, we exploit the potential mobility of UAVs to improve power efficiency. %We provide both optimal deployment and cell boundaries for the UAV-based aerial base stations.

The rest of this paper is organized as follows. Section II presents the system model describing the air-to-ground channel model. In Section III, the power minimization problem using transport theory and facility location problem is investigated. In Section IV, we present the numerical results. Finally, Section V concludes the paper. \vspace{-0.1cm}

\section{System Model}
Consider a geographical area divided into $K$ subareas in which $N$ wireless users are distributed based on an arbitrary distribution $f(x,y)$. This area must be serviced by multiple UAVs that will act as flying base stations as shown in Figure \ref{fig: UAV}. Each subarea will be served by a single UAV located at $({x_i},{y_i},{h_i})$ in the Cartesian coordinate where index $i$ corresponds to UAV $i$. Initially, we consider the subarea $i$ over $[{x_{s,i}},{x_{s,i + 1}}] \times [{y_{s,i}},{y_{s,i + 1}}] \subset \mathds{R}^2$.  We consider a downlink scenario in which UAVs adopt a frequency division multiple access (FDMA) technique to transmit data to the ground users at a desired data rate. FDMA assigns individual frequency bands to users and each user has its own dedicated channel for communications. Without loss of generality, we assume that the total transmit power of UAVs and the total available bandwidth is sufficient to meet the users' rate requirement. 

In our model, the UAVs transmit over different frequency bands and hence they do not interfere with one another. Moreover, hereinafter, we use the notion of a \emph{cell} to indicate the coverage region of each UAV. In other words, each UAV is associated with a cell within which the ground users serviced by this UAV are located. Note that, at the initial setup, the cell boundary associated with each UAV is not optimal, and our goal is to optimize those such that the total transmit power is minimized. Next, we first provide the air-to-ground channel model and, then, present the problem formulation.

\begin{figure}[!t]
  \begin{center}
   \vspace{-0.2cm}
    \includegraphics[width=8cm]{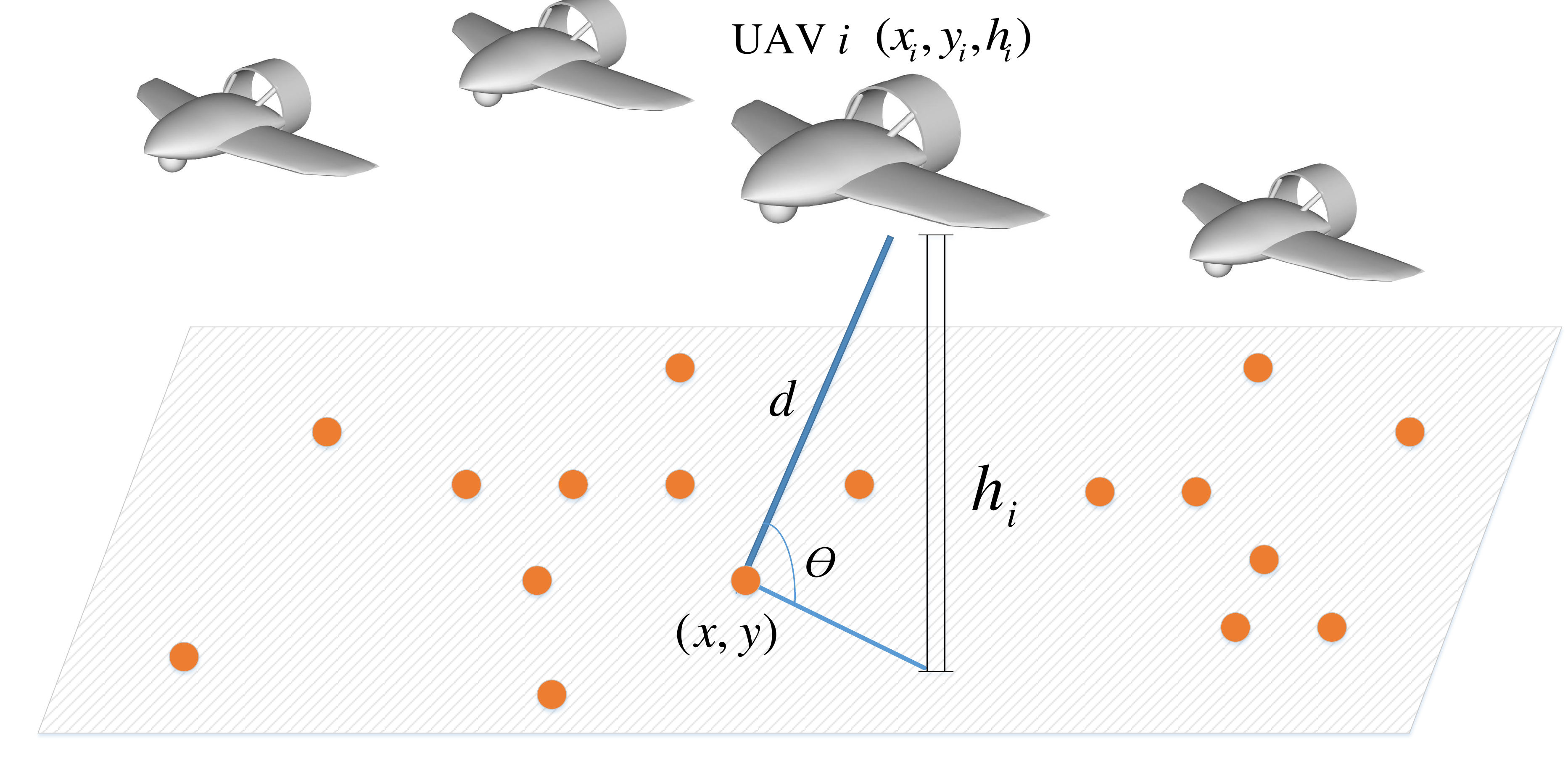}
    \vspace{-0.1cm}
    \caption{ UAVs transmitting data to ground users. \vspace{-0.5cm}}
    \label{fig: UAV}
  \end{center}\vspace{-0.2cm}
\end{figure}

\subsection{Air-to-ground channel model}
Typically, the air-to-ground propagation channel is modeled by considering the probabilistic LOS and non-line-of-sight (NLOS) links \cite{HouraniModeling}. In such a model, NLOS links experience higher attenuations due to the shadowing and diffraction loss. Therefore, in addition to the free space propagation loss, an additional path loss is considered for NLOS links.

The path loss model for LOS and NLOS links at the user location is given by \cite{HouraniModeling} and \cite{HouraniOptimal}:
\begin{equation}
{L_i(x,y)} = \left\{ \begin{array}{l}
{\left| {{{d_i}(x,y)}} \right|^{ {\alpha}}},{\rm{{\rm \hspace*{1.6cm}{LOS\hspace*{.2cm} link}}}},\\
\eta {\left| {{{d_i}(x,y)}} \right|^{  {\alpha }}},{\rm \hspace*{1.3cm} {   NLOS \hspace*{.2cm}link}},
\end{array} \right.
\end{equation}
where  $L$  is the path loss, $\left| {{{d_i}(x,y)}} \right|$  is the distance between a generic user located at at coordinates $(x,y)$ and UAV $i$, ${\alpha}$ is the path loss exponent over the user-UAV link, and  $\eta $ is an additional attenuation factor due to the NLOS connection. Here, the probability of LOS connection depends on the environment, density and height of buildings, the location of the user and the UAV, and the elevation angle between them. The LOS probability is given by  \cite{HouraniOptimal}:\vspace{-0.1cm}
\begin{equation}\label{PLOS}
{P_{{\rm{LOS}}}} = \frac{1}{{1 + C\exp ( - D\left[ {\theta  - C} \right])}},
\end{equation}
where $C$  and $D$  are constants which depend on the environment (rural, urban, dense urban, or others) and $\theta$  is the elevation angle.  Clearly, $\theta  = \frac{{180}}{\pi } \times {\sin ^{ - 1}}\left( {{\textstyle{{{h_i}} \over {{d_i}(x,y)}}}} \right)$, ${d_i}(x,y) = \sqrt {{{(x - {x_i})}^2} + {{(y - {y_i})}^2} + h_i^2}$ and also, the probability of NLOS is ${{P}_{{\text{NLOS}}}} = 1 - {{P}_{{\text{LOS}}}}$.\\
As observed from (\ref{PLOS}), the LOS probability increases as the elevation angle between the user and UAV increases.

Finally, the average path loss will be:
\begin{equation}\label{L_ave}
\bar L_i(x,y) = {P_{{\rm{LOS}}}}|{d_i}(x,y){|^\alpha } + \eta{P_{{\rm{NLOS}}}} |{d_i}(x,y){|^\alpha }.
\end{equation}

\subsection{Problem formulation}%\vspace{-0.1cm}
Consider the transmission between UAV $i$ and a ground user located at $(x,y)$ coordinates. The achievable rate for the user is given by:\vspace{-0.2cm}
\begin{equation}\label{Rate}
R_i(x,y) = W_i{\log _2}\left( {1 + \frac{{{P_i}(x,y)/\bar L_i(x,y)}}{{{N_0}}}} \right),
\end{equation}
where $W_i$  is the transmission bandwidth of UAV $i$, ${P_i}(x,y)$  is the UAV transmit power to the user,  $\bar L_i(x,y)$ is the average path loss between UAV $i$ and the user, and ${N_0}$ is the noise power. Considering ${B_i}$ as the total available bandwidth at UAV $i$, and $M_i$ as the number of users serviced by UAV $i$, we have ${W_i} = \frac{{{B_i}}}{{{M_i}}}$. Note that, ${M_i}$ is the number of users inside cell boundary of UAV $i$ which is computed as ${M_i} = N\iint_{C_i} {f(x,y)} {\rm{ d}}x{\rm{d}}y$, with $N$ being the total number of users, and  ${C_i}$ being the cell boundary corresponding to UAV $i$. Clearly, the number of users covered by the UAV depends on the distribution of users, cell boundary, and the location of the UAV.

The minimum transmit power required to satisfy the rate requirement $\beta$ of ground users is given by: \vspace{-0.1cm}
\begin{equation}\label{P_min}
{P_{i,\min }}(x,y) = \left( {{2^{\beta /{W_i}}} - 1} \right){N_0}\bar L_i(x,y),
\end{equation}
which is derived using (\ref{Rate}) and $R(x,y) \ge \beta$.
Note that, as the number of users increases, the bandwidth per user decreases. Consequently, a higher transmit power is required to meet the users’ rate requirement. 

Given the location of UAVs, the average total transmit power of the UAVs in the network is given by:
\begin{equation}\label{P_t}
{P_t} = \sum\limits_{i = 1}^K {\iint_{C_i}{{P_{i,\min }}(x,y)f(x,y)}} {\rm{ d}}x{\rm{d}}y.
\end{equation}

Our goal is to minimize the total required transmit power by finding jointly the optimal locations of the UAVs and their associated cell boundaries. Therefore, the power minimization problem can be formulated as follows:
\begin{equation}\label{P_t_min}
\mathop {\min {P_t}}\limits_{C_i,x_i,y_i,h_i} = \sum\limits_{i = 1}^K {\iint_{C_i}{{\left( {{2^{\beta {M_i}/{B_i}}} - 1} \right){N_0}\bar L_i(x,y)}}} {\rm{ d}}x{\rm{d}}y{\rm{ }},
\end{equation}
where $i \in \{ 1,2,...,K\}$, $C_i$ is the cell boundary that shows the coverage region of the UAV, and $(x_i,y_i,h_i)$ is the location of UAV $i$ .

The solution for (\ref{P_t_min}) provides optimum cell boundaries and UAVs location such that the total average transmit power of UAVs is minimized while the rate requirement for all the users is maintained. However, solving (\ref{P_t_min}) is challenging due to the mutual dependency of $M_i$, $x_i$, $y_i$, $h_i$, and $C_i$. \textcolor{black}{ Furthermore, the problem must be solved over a continuous space while considering an infinite number of possible UAVs' locations and cell boundaries }. Therefore, there is a need for advanced mathematical tools that can help in finding the solution of (\ref{P_t_min}).\vspace{-0.01cm}
\section{Power optimization: Optimal transport theory and facility location}\label{sec:inter}
In order to solve (\ref{P_t_min}), we separate the problem into two optimization problems and solve them sequentially. In the first problem, we assume that the cell boundaries for UAVs are given and the objective is to determine the optimal location of UAVs for which the transmit power is minimized. In the second problem, for the given locations of UAVs, we derive the optimal cell association which lead to the minimum total required transmit power. \vspace{-0.15cm}
\subsection{Optimal UAVs location given cell boundaries}
Here, we first consider a scenario in which the UAVs  move and change their locations based on the users’ distribution. In order to minimize the total transmit power, given the cell boundaries, we find the optimal location of each UAV in its corresponding subarea. To this end, we use the \textit{facility location framework} \cite{Farahani}. In the facility location problem, given sets of facilities and clients, the goal is to find the optimal placement of facilities to minimize total transportation costs between the clients and facilities. Here, we consider the UAVs as the facilities, users as the clients, and the transmit power as the transportation costs. Consequently, given the distribution of users over the geographical area, the total transmit power of UAVs is given by ${P_t} = \sum\limits_{i = 1}^K {{P_{i}}}$.
where $P_{i}$ is the average transmit power of UAV $i$ over its corresponding subarea which is given by:\vspace{-0.1cm}
\begin{equation}\label{Pi_facility}
{P_{i}} = \int\limits_{{y_k}}^{{y_{k + 1}}} {\int\limits_{{x_k}}^{{x_{k + 1}}} {\left( {{2^{\beta /{W_i}}} - 1} \right){N_0}\bar L_i(x,y)} } f(x,y){\rm{ d}}x{\rm{d}}y.
\end{equation}\vspace{-0.4cm}

Note that, for a given cell boundary, the number of users inside the cell is fixed and optimizing $P_t$ and $P_i$ will not depend on $M_i$. Furthermore, in this case, minimizing $P_t$ is equivalent to minimizing $P_{i}$ for all  $i \in \{ 1,2,...,K\}$.
Therefore, the optimization problem in (\ref{P_t_min}) can be written as: \vspace{0.1cm}
 \begin{align}
\mathop {\min }\limits_{{x_i},{y_i},{h_i}} {P_{i}} = \int\limits_{{y_{s,i}}}^{{y_{s,i + 1}}} {\int\limits_{{x_{s,i}}}^{{x_{s,i + 1}}} {} } {\left( {{{(x - {x_i})}^2} + {{(y - {y_i})}^2} + h_i^2} \right)}\nonumber\\\times\left( {{P_{{\rm{LOS}}}} + (1 - {P_{{\rm{LOS}}}})\eta } \right)f(x,y){\rm{ d}}x{\rm{d}}y,
\end{align}
where we use $\alpha=2$ as the typical value of path loss exponent in LOS communications \cite{HouraniOptimal} .

Now, we derive a closed-form expression for the optimal location of the UAVs when they are deployed at high or low altitudes relative to the size of the subareas. \vspace{-0.2cm}
\begin{theorem} 
 \normalfont
Seeking a minimum required transmit power, the optimal location of  UAV $i$   positioned at  high or low altitudes compared to the size of its corresponding subarea ($\sqrt{{(x - {x_i})^2} + {(y - {y_i})^2}}$), is given by:
\begin{equation}\label{x_opt}
x_i^{{\rm{*}}} = \frac{{\int\limits_{{y_k}}^{{y_{k + 1}}} {\int\limits_{{x_k}}^{{x_{k + 1}}} x } f(x,y){\rm{ d}}x{\rm{d}}y}}{{\int\limits_{{y_k}}^{{y_{k + 1}}} {\int\limits_{{x_k}}^{{x_{k + 1}}} {f(x,y){\rm{ d}}x{\rm{d}}y} } }},
\end{equation}
\begin{equation} \label{y_opt}
y_i^{{\rm{*}}} = \frac{{\int\limits_{{y_k}}^{{y_{k + 1}}} {\int\limits_{{x_k}}^{{x_{k + 1}}} y } f(x,y){\rm{ d}}x{\rm{d}}y}}{{\int\limits_{{y_k}}^{{y_{k + 1}}} {\int\limits_{{x_k}}^{{x_{k + 1}}} {f(x,y){\rm{ d}}x{\rm{d}}y} } }},
\end{equation}
where $h_i^2 >> {(x - {x_i})^2} + {(y - {y_i})^2}$ or $h_i^2 << {(x - {x_i})^2} + {(y - {y_i})^2}$. 
Note that, considering $f(x,y)$  as the distribution of users over the geographical area, $(x_i^{{\rm{*}}},y_i^{{\rm{*}}})$ corresponds to the centroid the area.  
\end{theorem}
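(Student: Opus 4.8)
The plan is to exploit the two asymptotic altitude regimes to decouple the location-dependent weighting factor $P_{\mathrm{LOS}}+(1-P_{\mathrm{LOS}})\eta$ from the quadratic distance term $|d_i(x,y)|^2$, after which the objective collapses to a standard weighted least-squares problem whose minimizer is the centroid. The central difficulty in the exact problem is that $P_{\mathrm{LOS}}$ depends on the elevation angle $\theta$, hence on $d_i(x,y)$ and on $(x_i,y_i,h_i)$; this coupling is precisely what obstructs a closed-form minimizer in the general case.

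First I would analyze the behavior of $\theta$ in each regime. Writing $\sin\theta = h_i/d_i(x,y)$ with $d_i(x,y)=\sqrt{(x-x_i)^2+(y-y_i)^2+h_i^2}$, observe that when $h_i^2 \gg (x-x_i)^2+(y-y_i)^2$ we have $d_i(x,y)\to h_i$, so $\sin\theta \to 1$ and $\theta \to 90^\circ$, whereas when $h_i^2 \ll (x-x_i)^2+(y-y_i)^2$ we have $\sin\theta \to 0$ and $\theta\to 0^\circ$. In either case $\theta$ becomes asymptotically independent of the user coordinates $(x,y)$, so by \eqref{PLOS} the factor $\kappa \triangleq P_{\mathrm{LOS}}+(1-P_{\mathrm{LOS}})\eta$ tends to a constant and can be pulled outside the integral. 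Establishing this limiting behavior and arguing that the residual variation of $\kappa$ is negligible relative to the quadratic term is the step I expect to be the main obstacle, since this is exactly where the physical high/low altitude hypothesis carries the argument.

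With $\kappa$ treated as constant, the objective in the displayed minimization reduces, up to the positive multiplicative constant $\kappa(2^{\beta/W_i}-1)N_0$, to $g(x_i,y_i)=\iint \big( (x-x_i)^2+(y-y_i)^2+h_i^2 \big) f(x,y)\,\mathrm{d}x\,\mathrm{d}y$. I would then impose the first-order conditions $\partial g/\partial x_i = 0$ and $\partial g/\partial y_i = 0$. Differentiating under the integral gives $\iint -2(x-x_i)f(x,y)\,\mathrm{d}x\,\mathrm{d}y=0$, and solving for $x_i$ yields exactly \eqref{x_opt}; the symmetric computation in $y_i$ gives \eqref{y_opt}. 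Note that the $h_i^2$ term is constant in $(x_i,y_i)$ and therefore drops out of both stationarity conditions, consistent with the theorem addressing only the horizontal coordinates.

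Finally I would confirm that this stationary point is the global minimizer through a convexity argument. The function $g$ is separable in $(x_i,y_i)$ with $\partial^2 g/\partial x_i^2 = \partial^2 g/\partial y_i^2 = 2\iint f(x,y)\,\mathrm{d}x\,\mathrm{d}y>0$ and vanishing mixed partial, so its Hessian is positive definite and $g$ is strictly convex. Hence the unique critical point is the global minimum, and it coincides with the centroid of the subarea under the density $f$, as claimed.
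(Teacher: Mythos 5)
Your proposal is correct and follows essentially the same route as the paper: in both asymptotic altitude regimes the LOS-dependent attenuation factor becomes effectively constant (the paper takes $P_{\mathrm{LOS}}\approx 1$ at high altitude and the NLOS-dominated factor $\eta$ at low altitude, where your unified ``$\kappa$ tends to a constant'' argument is actually cleaner and sidesteps the paper's apparent typo of writing $P_{\mathrm{LOS}}\approx 1$ in the low-altitude case), after which differentiating under the integral and setting the first-order conditions to zero yields the centroid. Your closing convexity/Hessian check of global optimality is a small but worthwhile addition that the paper omits, as it only solves the stationarity conditions.
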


\begin{proof}
At very high altitudes, ${d_i}(x,y) \approx {h_i} \to \theta  = \frac{{180}}{\pi } \times {\sin ^{ - 1}}\left( {{\textstyle{h \over {{d_i}(x,y)}}}} \right) \approx {90^o}$, and hence, ${P_{{\rm{LOS}}}} \approx 1$.
Then we have, $\bar L_i(x,y) \approx \left( {{{(x - {x_i})}^2} + {{(y - {y_i})}^2} + h_i^2} \right)$, $\frac{{\partial \bar L_i(x,y)}}{{\partial {x_i}}} = 2(x - {x_i})$, and $\frac{{\partial \bar L_i(x,y)}}{{\partial {y_i}}} = 2(y - {y_i})$.\\
At very low altitudes, ${P_{{\rm{LOS}}}} \approx 1$, $\bar L_i(x,y) = \eta \left( {{{(x - {x_i})}^2} + {{(y - {y_i})}^2} + h_i^2} \right)$, $\frac{{\partial \bar L_i(x,y)}}{{\partial {x_i}}} = 2\eta (x - {x_i})$, and $\frac{{\partial \bar L_i(x,y)}}{{\partial {y_i}}} = 2\eta (y - {y_i})$. As a result,

\begin{align} \label{x_o}
\frac{{\partial {P_{i}}}}{{\partial {x_i}}} &= \frac{{\partial \int\limits_{{y_k}}^{{y_{k + 1}}} {\int\limits_{{x_k}}^{{x_{k + 1}}} {\bar L(x,y)} } f(x,y){\rm{ d}}x{\rm{d}}y}}{{\partial {x_i}}} \nonumber\\
&=\int\limits_{{y_k}}^{{y_{k + 1}}} {\int\limits_{{x_k}}^{{x_{k + 1}}} {\frac{{\partial \bar L(x,y)}}{{\partial {x_i}}}} } f(x,y){\rm{ d}}x{\rm{d}}y = 0.
\end{align}
 
 Likewise,
 \begin{align} \label{y_o}
 \frac{{\partial {P_{i}}}}{{\partial {y_i}}} = \int\limits_{{y_k}}^{{y_{k + 1}}} {\int\limits_{{x_k}}^{{x_{k + 1}}} {\frac{{\partial \bar L(x,y)}}{{\partial {y_i}}}} } f(x,y){\rm{ d}}x{\rm{d}}y = 0.
 \end{align} \vspace{0.2cm}
 Finally, solving (\ref{x_o}) and (\ref{y_o}) leads to (\ref{x_opt}) and (\ref{y_opt}).
\end{proof}

Next, we provide an approximation for the optimal location of UAV $i$ deployed at an arbitrary altitude $h_i$
\begin{theorem} 
 \normalfont
An approximation for the optimal location of UAV $i$  is the solution of the following system of equations:
\begin{equation} \label{System_eq}
\left\{ \begin{array}{l}
{g_1}({x_i},{y_i}) = {a_1}x_i^3 + {a_2}x_i^2 + {a_3}{x_i} + {a_4} = 0,\\
{g_2}({x_i},{y_i}) = {b_1}y_i^3 + {b_2}y_i^2 + {b_3}{y_i} + {b_4} = 0,
\end{array} \right.
\end{equation}
where $a_1,...,a_4$ and $b_1,...,b_4$ coefficients will be presented in the proof.    
\end{theorem}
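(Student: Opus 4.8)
The plan is to characterize the minimizer of $P_i$ through its first-order stationarity conditions $\partial P_i/\partial x_i = 0$ and $\partial P_i/\partial y_i = 0$, exactly as in Theorem~1, but this time without discarding the dependence of the LOS probability on the UAV position. Writing $g(x,y) \triangleq \eta + (1-\eta)P_{\mathrm{LOS}}$ so that $\bar L_i(x,y) = \left[(x-x_i)^2 + (y-y_i)^2 + h_i^2\right] g(x,y)$, differentiating under the integral sign gives
\begin{equation}
\frac{\partial P_i}{\partial x_i} = \iint \left[ -2(x-x_i)\,g + \left((x-x_i)^2+(y-y_i)^2+h_i^2\right)\frac{\partial g}{\partial x_i}\right] f\, \mathrm{d}x\,\mathrm{d}y,
\end{equation}
and symmetrically for $y_i$. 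In Theorem~1 the second term vanished because $g\equiv 1$; here it is precisely this term that must be tamed.

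The hard part is that $g$ depends on $(x_i,y_i)$ through $\theta = \frac{180}{\pi}\sin^{-1}\!\left(h_i/d_i\right)$ and the sigmoid in (\ref{PLOS}), so $\partial g/\partial x_i$ is a transcendental function of the offsets $(x-x_i)$ and $(y-y_i)$ and the stationarity equation admits no closed form. My plan is to replace $g$ (equivalently $P_{\mathrm{LOS}}$ viewed as a function of the horizontal offset) by its \emph{second-order} Taylor approximation about the UAV nadir, so that $g$ becomes a quadratic polynomial in $(x-x_i)$ and $(y-y_i)$. Substituting this quadratic into the bracketed integrand makes the whole integrand a polynomial in $x_i$ whose coefficients are moments $\iint x^m f\,\mathrm{d}x\,\mathrm{d}y$ of $f$ over the subarea; carrying out the integrations collapses those moments into numerical constants.

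After this substitution, $\partial P_i/\partial x_i = 0$ becomes a polynomial identity in $x_i$. The degree count works out cleanly: the explicit factor $-2(x-x_i)\,g$ with $g$ quadratic in the offset, together with the quadratic distance factor multiplying the linear-in-offset piece of $\partial g/\partial x_i$, both produce at most a cubic in $(x-x_i)$; expanding $(x-x_i)^3$ and integrating against $f$ leaves a leading $x_i^3$ term. Hence the equation takes the form $a_1 x_i^3 + a_2 x_i^2 + a_3 x_i + a_4 = 0$, and the identical manipulation in the $y$ direction gives $b_1 y_i^3 + \cdots + b_4 = 0$. The coefficients $a_1,\dots,a_4$ and $b_1,\dots,b_4$ are simply the accumulated integrals of $f$ against powers of $x$ (resp. $y$) scaled by the Taylor-expansion constants, which is exactly what the statement defers to the proof and what I would assemble last.

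The main obstacle I anticipate is twofold. First, one must justify the truncation order of the sigmoid-of-arcsin expansion so that it is accurate across the cell yet low enough to keep the equations cubic rather than higher degree. Second, and more delicate, is securing the \emph{decoupling} of (\ref{System_eq}) into one equation in $x_i$ and one in $y_i$: the exact $\partial g/\partial x_i$ couples both offsets through $d_i=\sqrt{(x-x_i)^2+(y-y_i)^2+h_i^2}$, so the separation must come from dropping the mixed $(x-x_i)(y-y_i)$ contributions, e.g. by evaluating the angle-dependent factor in a separable form. I expect the bulk of the bookkeeping to lie in arguing that these cross terms are negligible at the optimum and in collecting the $a_j,b_j$ coefficients, rather than in any conceptually new step beyond the stationarity argument already used for Theorem~1.
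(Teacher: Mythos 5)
Your overall route is the paper's: replace $P_{\rm LOS}$ by a quadratic in the horizontal offset, substitute into the first-order conditions $\partial P_i/\partial x_i = \partial P_i/\partial y_i = 0$ of Theorem~1, and read off cubic equations. The paper obtains the quadratic not by a Taylor expansion at the nadir but by a two-step least-squares fit, $P_{\rm LOS} \approx f_1(h_i)\left((x-x_i)^2+(y-y_i)^2\right) + f_2(h_i)$, with explicitly reported polynomials $f_1(h_i), f_2(h_i)$ declared valid for $100\,{\rm m} \le h_i \le 2000\,{\rm m}$ and offsets up to $1000\,$m; this sidesteps your first anticipated obstacle (justifying the truncation order across the whole cell), since the fit is global over the stated ranges rather than local, while keeping the same quadratic structure and hence the same cubic degree count you give.

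The genuine misstep is your ``decoupling'' plan. The theorem never asserts separated equations: it writes $g_1(x_i,y_i)$ and $g_2(x_i,y_i)$ --- each equation is cubic in one coordinate but its coefficients depend on the other, and indeed the paper's $a_3$ and $a_4$ contain $(y-y_i)^2$ inside the integral operator $F_{\rm int}$. With the quadratic model one gets $\partial \bar L_i/\partial x_i = -2(x-x_i)\bigl[q + \lambda h_i^2 + 2\lambda\bigl((x-x_i)^2+(y-y_i)^2\bigr)\bigr]$ where $q = \eta + (1-\eta)f_2(h_i)$ and $\lambda = (1-\eta)f_1(h_i)$; the mixed term $(x-x_i)(y-y_i)^2$ is only degree one in $x_i$, so it is absorbed into $a_3, a_4$ without raising the degree, and there is nothing to drop. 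Discarding the cross terms, as you propose, would add an uncontrolled approximation (generally not negligible for asymmetric $f$ or off-center optima) to force a separation the statement does not claim; the paper instead solves the \emph{coupled} pair of cubics jointly by Newton--Raphson. Finally, your write-up should include the paper's closing existence remark: since $P_i$ is continuous and differentiable and the optimum lies in the compact subarea ($x_{s,i} < x_i < x_{s,i+1}$, $y_{s,i} < y_i < y_{s,i+1}$), a minimizer exists and the system is guaranteed to have a solution --- needed because the theorem speaks of ``the solution'' of the system.
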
 

\begin{proof} 
In order to find the optimal location of UAV $i$ $({x_i},{y_i})$, we first provide an approximation for ${P_{{\rm{LOS}}}}$  that simplifies our analysis. Using two steps least square, ${P_{{\rm{LOS}}}}$ for $100\,{\rm{ m}} \le {h_i} \le 2000\,{\rm{ m}}$  and $\sqrt {{{(x - {x_i})}^2} + {{(y - {y_i})}^2}}  \le 1000\,{\rm{m }}$ can be approximated by:
\begin{equation}
{P_{{\rm{LOS}}}} \approx {f_1}({h_i})\left( {{{(x - {x_i})}^2} + {{(y - {y_i})}^2}} \right) + {f_2}({h_i}),
\end{equation}
where  ${f_1}({h_i}) = - {10^{ - 11}}h_i^2 + 15 \times {10^{ - 9}}{h_i} - 57 \times {10^{ - 7}}$, and ${f_2}({h_i}) = 2.37 \times {10^{ - 7}}h_i^2 - {\rm{5}}{\rm{.24}} \times {10^{ - 4}}{h_i} + 1.32$.

According to (\ref{L_ave}), (\ref{x_o}), (\ref{y_o}), and considering the fact that $\bar L(x,y)$  is continuous and differentiable, the system of equations presented in (\ref{System_eq}) is derived with
 ${a_1} = 4\lambda {F_{{\mathop{\rm int}} }}(1)$, ${a_2} =  - 12\lambda {F_{{\mathop{\rm int}} }}(x)$, ${a_3} = {F_{{\mathop{\rm int}} }}\left( {2q + 4qh_i^2 + 4\lambda x\left[ {3x + {{(y - {y_i})}^2}} \right]} \right)$, ${a_4} =  - {F_{{\mathop{\rm int}} }}\left( {2qx + 4\lambda x\left[ {{x^2} + h_i^2 + {{(y - {y_i})}^2}} \right]} \right)$, and ${F_{{\mathop{\rm int}} }}\left( {u(x,y)} \right) = \int\limits_{{y_{s,i}}}^{{y_{s,i + 1}}} {\int\limits_{{x_{s,i}}}^{{x_{s,i + 1}}} {u(x,y)f(x,y){\rm{ d}}x{\rm{d}}y} }$. Also, we have $q = \eta  + (1 - \eta ){f_2}({h_i})$, and $\lambda  = (1 - \eta ){f_1}({h_i})$.
Note that, in (\ref{System_eq}), ${g_2}({x_i},{y_i})$  can be found from ${g_1}({x_i},{y_i})$  by substituting ${y_i}$  instead of ${x_i}$, and $y$ instead of $x$ except for $f(x,y)$. 

 This system of equations can be solved using the Newton-Raphson method \cite{ben}. \textcolor{black}{Note that, since $P_i$ is continuous and differentiable, $x_{s,i}<x_i<x_{s,i+1}$, and $y_{s,i}<y_i<y_{s,i+1}$, an optimal $(x,y)$ which leads to  a minimum $P_i$ exists. Therefore, (\ref{System_eq}) has a solution.}   
\end{proof} 
Based on Theorems 1 and 2, we can find the optimal locations of UAVs over the given cell boundaries. Next, we derive the optimal cell boundaries assuming that the locations of the UAVs are fixed.\vspace{-0.35cm}
\subsection{Optimal cell boundaries given UAVs location}\vspace{-0.1cm}
In this optimization problem, the goal is to derive the optimal cell boundaries for all UAVs, assuming that the locations of the UAVs are given. The optimal cell association for which the total transmit power is minimized, depends on the distribution of the users. %In the case of uniform distribution of users over the area, the Voronoi cells is the optimal solution if the multipath fading is ignored \cite{Silva}. However, for non-uniform distribution of users such as the Gaussian distribution, finding the optimal cell boundaries is more complicated. 

Given the locations of UAVs, the optimization problem in (\ref{P_t_min}) can be expressed by:
\begin{equation}\label{P_t_transport}
\mathop {\min {P_t}}\limits_{C_i} = \sum\limits_{i = 1}^K {\iint_{C_i}{{\left( {{2^{\beta {M_i}/{B_i}}} - 1} \right){N_0}\bar L_i(x,y)}}} {\rm{ d}}x{\rm{d}}y{\rm{ }}.
\end{equation}

Note that, in (\ref{P_t_transport}), the cell boundaries impact the number of users located inside the cells, and consequently, the total transmit power. Furthermore, according to (\ref{P_t_transport}),  the transmit power of each UAV exponentially increases as a function of the number of users that it serves. Hence, in the cell association problem, in addition to the channel gain, $\bar L(x,y)$, the number of users in the cell should also be taken into account.  

To solve (\ref{P_t_min}), we draw a striking analogy with the framework of \emph{optimal transport theory}\cite{Villani2008}. Optimal transport theory is a mathematical framework for studying the optimal transportation map between two probability distributions. Optimal transport theory was first introduced by Monge while dealing with a transportation problem. In the Monge problem, piles of sand and some holes with the same total volume of sands are randomly distributed over an geographical area. The objective is to find the optimal moves to transport the entire piles to the holes with a minimum transportation cost. Intuitively, the optimal moves depend on the cost function which is a function of distance between pills and holes. Our optimization problem in (\ref{P_t_transport}), can be considered as finding optimal moves to transport data from UAVs to the location of users. In this case, the cost of transportation is the minimum required transmit power that satisfies the users' rate requirement. Furthermore, in our model, the best moves corresponds to determining the best users associated with each UAV which leads to the optimal cell boundaries. Now, we rewrite (\ref{P_t_transport}) as follows:
\begin{align}\label{P_t_transport2}
\mathop {\min }\limits_{{C_i}{\rm{ }}} \sum\limits_{i = 1}^K {\iint_{{C_i}} {F\left( {{d_i}(x,y)} \right)} }. S({M_i}) f(x,y){\rm{d}}x{\rm{d}}y,
\end{align}
where ${M_i} = \iint_{C_i}{{f(u,v){\rm{ d}}u{\rm{d}}v}}$, $S({M_i}) = \left( {{2^{\beta {M_i}/{B_i}}} - 1} \right)$, and
\begin{equation}\label{F}
F({d_i}(x,y)) = {N_0}\bar L_i(x,y).
\end{equation}
 
To solve (\ref{P_t_transport2}), we invoke a lemma from optimal transport theory \cite{Silva}. \vspace{-0.1cm}
%Later, this problem was generalized to Monge–Kantorovich transportation problem in which its solution can be applied to many applications such as power optimization in cellular networks.
%In fact, using optimal transport theory, we determine users association to each UAV resulting in the optimal cell boundaries of the UAVs.
\begin {lemma}
 \normalfont
Assuming that $F$ is a continuous function, and $S$ is derivable, the solution for (\ref{P_t_transport2}) is given by \cite{Silva}:
%\begin{align}\label{P_min}
%\mathop {\min }\limits_{{C_i},{\rm{ }}i = 1,...,K} \sum\limits_{i = 1}^K {\iint_{{C_i}} {F\left( {{d_i}(x,y)} \right)} }\times\nonumber\\ S\left( {\iint_{{C_i}} {f(u,v){\rm{ d}}u{\rm{d}}v} } \right) f(x,y){\rm{d}}x{\rm{d}}y,
%\end{align}
%where $F$ is a continues function, and $S$ is derivable. As discussed in \cite{Silva}, using optimal transport theory, the solution for  (\ref{F}) is given by
\begin{equation}\label{C_i}
{C_i} = \left\{\begin{array}{l}
\hspace{-0.2cm}(x,y);S({M_i})F({d_i}(x,y))f(x,y) + {T_i}(x,y) \le\\
 S({M_j})F({d_j}(x,y))f(x,y) + {T_j}(x,y),\forall i \ne j
\end{array}\hspace{-0.2cm}\right\},
\end{equation}
where ${M_i} = \iint_{C_i}{{f(u,v){\rm{ d}}u{\rm{d}}v}}$, and ${T_i}(x,y)=\frac{{\partial S}}{{\partial {M_i}}}\iint_{C_i}{{F({d_i}(x,y))f(x,y){\rm{  d}}x{\rm{d}}y}}$.
\end{lemma}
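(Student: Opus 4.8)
The plan is to treat (\ref{P_t_transport2}) as a variational problem over the partitions $\{C_i\}_{i=1}^K$ of the service area and to obtain the stated rule from a first-order (Euler--Lagrange) optimality condition. Writing the objective compactly as $P_t = \sum_{i=1}^K S(M_i)\,G_i$, with $G_i = \iint_{C_i} F(d_i(x,y))f(x,y)\,\mathrm{d}x\,\mathrm{d}y$ and $M_i = \iint_{C_i} f(x,y)\,\mathrm{d}x\,\mathrm{d}y$, the essential feature — and the reason this is not a trivial pointwise minimization — is that the factor $S(M_i)$ couples every point of $C_i$ through the total served mass $M_i$. Reassigning even an infinitesimal patch of users therefore changes the cost charged to \emph{all} users already in the cell, so the optimality condition must track this non-local coupling, which is precisely what generates the correction terms $T_i$.

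First I would fix a candidate optimal partition and consider an admissible perturbation that transfers an infinitesimal patch $\delta A$ of users, carrying mass $\delta m = \iint_{\delta A} f\,\mathrm{d}x\,\mathrm{d}y$, from cell $C_j$ into cell $C_i$. Under this move $M_i \mapsto M_i + \delta m$ and $M_j \mapsto M_j - \delta m$, while $G_i$ and $G_j$ change by $\pm\iint_{\delta A} F(d_{i/j})f$. Expanding $S(M_i \pm \delta m)$ to first order and discarding second-order terms, the variation of $P_t$ separates into a direct transport part $S(M_i)F(d_i)-S(M_j)F(d_j)$ and a congestion part produced by $\tfrac{\partial S}{\partial M_i}G_i$ and $\tfrac{\partial S}{\partial M_j}G_j$; these last two quantities are exactly $T_i$ and $T_j$ as defined in the statement.

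Next I would impose optimality: at a minimizing partition no such transfer can lower $P_t$, so the first variation must be nonnegative for every patch $\delta A \subset C_j$ and every competitor $i \neq j$. Since $\delta A$ is arbitrary and the integrand of the variation is continuous, nonnegativity localizes to a pointwise inequality at almost every $(x,y)\in C_j$, namely that the augmented cost $S(M_j)F(d_j(x,y))f(x,y)+T_j$ does not exceed $S(M_i)F(d_i(x,y))f(x,y)+T_i$ (the common positive factor $f$ may be attached to the $T$ terms or cancelled without changing the minimizer). Ranging over all $i$ yields precisely the membership rule (\ref{C_i}). The self-referential appearance of $M_i$ — the set $C_i$ being described through $M_i$, which is itself an integral over $C_i$ — is to be read as a fixed-point/self-consistency characterization rather than an explicit formula, which is also how it is resolved numerically.

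The main obstacle I anticipate is making the congestion bookkeeping rigorous: I must justify the first-order expansion of $S(M_i\pm\delta m)$ (this is where derivability of $S$ enters), confirm that the interface between optimal cells carries zero $f$-measure so that membership on the boundary is immaterial, and verify that no second-order contribution survives as $\delta m \to 0$. Given the stated smoothness ($F$ continuous, $S$ derivable) these steps are routine, and the identification of $\tfrac{\partial S}{\partial M_i}G_i$ with $T_i$ then closes the argument; alternatively, the entire computation can be bypassed by directly invoking the optimal-transport lemma of \cite{Silva} applied to the cost density $S(M_i)F(d_i)f$.
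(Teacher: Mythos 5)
The paper never proves this lemma: it is imported wholesale from \cite{Silva}, and the text around it only interprets the result. Your variational argument is therefore a genuinely different — and more informative — route: in effect you reconstruct the proof the paper delegates to its reference. The mass-transfer perturbation is the standard and correct mechanism here: expanding $S(M_i\pm\delta m)$ to first order (this is where derivability of $S$ is used) produces exactly the congestion terms $\frac{\partial S}{\partial M_i}\iint_{C_i}F(d_i)f\,\mathrm{d}x\,\mathrm{d}y=T_i$, and arbitrariness of the patch, together with continuity of $F$, localizes the nonnegativity of the first variation to the stated pointwise membership rule; your fixed-point reading of the self-referential $M_i$ matches the paper's own remark that $\binom{K}{2}$ nonlinear equations must be solved. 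Two caveats deserve more than the passing treatment you give them. First, your first variation yields $f(x,y)$ multiplying \emph{both} the transport term $S(M_i)F(d_i)$ and the congestion term $T_i$ (equivalently neither, after cancelling where $f>0$); the inequality as printed in (\ref{C_i}), with $f$ attached to the transport term but not to $T_i$, does not follow from the variation unless $f$ is constant, so your parenthetical ``may be attached or cancelled'' is glossing a genuine inconsistency in the statement as reproduced here rather than a harmless normalization — this should be said explicitly, since where $f$ is nonconstant the hybrid form defines different boundaries. Second, your argument establishes (\ref{C_i}) as a first-order \emph{necessary} condition at a minimizer whose existence you assume; the lemma's claim that this \emph{is} the solution additionally needs existence of an optimal partition and solvability of the self-consistency system in $M_1,\dots,M_K$, which is the part genuinely carried by \cite{Silva}. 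What each approach buys: the citation keeps the paper short, while your derivation exposes where each hypothesis and each term comes from — and, as a byproduct, catches the misplaced factor of $f$ in the printed statement.
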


As we can see, the solution in (\ref{C_i}) provides optimal cell boundaries, $(x,y)$, for each UAV, in such a way that total transmit power is minimized. \textcolor{black}{Given Lemma 1, to solve (\ref{P_t_transport2}), a total of ${K \choose 2}$ non-linear equations must be solved.} Note that, the optimal cell boundaries, (\ref{C_i}), are  a function of users’ distribution. In the provided solution, there is \emph{no specific assumption on the distribution of users and it holds for any arbitrary distribution}. Hereinafter, we consider  the uniform and \textcolor{black}{truncated Gaussian distributions}  for users location. These distributions considering a geographical area with size of   ${L_x} \times {L_y}$ are given by:
\begin{equation}
f(x,y) = \frac{1}{{{L_x}{L_y}}},\hspace{.3cm}{\textnormal{   for uniform distribution}},
\end{equation} \vspace{-0.4cm}
\begin{equation}
f(x,y) = \frac{1}{G}{\rm{exp}}{\left( {\frac{{{L_x} - {\mu _x}}}{{\sqrt {2{\sigma _x}} }}} \right)^2}{\rm{exp}}{\left( {\frac{{{L_y} - {\mu _y}}}{{\sqrt {2{\sigma _y}} }}} \right)^2},
\end{equation}
where $G = 2\pi {\sigma _x}{\sigma _y}{\rm{erf}}\left( {\frac{{{L_x} - {\mu _x}}}{{\sqrt {2{\sigma _x}} }}} \right){\rm{erf}}\left( {\frac{{{L_y} - {\mu _y}}}{{\sqrt {2{\sigma _y}} }}} \right)$, ${\mu _x}$, ${\sigma _x}$, ${\mu _y}$, and ${\sigma _y}$ are the mean and standard deviation values in the $x$ and $y$ directions, and ${\rm{erf(}}z{\rm{)}} = \frac{2}{{\sqrt \pi  }}\int\limits_0^z {{e^{ - {t^2}}}{\rm{d}}t}$.

The \textcolor{black}{truncated Gaussian distributions} is used for modeling a hotspot area in which users are highly distributed around the hotspot center and their density decreases at locations further away from the center. The center of hotspot is denoted by (${\mu _x}$, ${\mu _y}$), and the density around the center depends on the ${\sigma _x}$  and ${\sigma _y}$  values. Here, we define ${\rho _x} = \frac{1}{{{\sigma _x}}}$  and  ${\rho _y} = \frac{1}{{{\sigma _y}}}$   as the density of user in $x$ and $y$ directions respectively. 
%Finally, given the distribution of users over the geographical area, using (\ref{C_i}), (\ref{F}) and (\ref{S}), the optimal cell boundaries for UAVs are derived.

In summary, to achieve the optimal solution for (\ref{P_t_min}), first we generate an initial cell boundary for each UAV. Next, following the discussion in Subsection III.A, we position each UAV at the optimal location over its coverage region. Given the new UAVs locations, we determine a new set of cell boundaries by following Subsection III.B. In the next iteration, the optimal location of UAVs for the updated cell boundaries are obtained, and then, the cell boundaries are updated again based on the latest UAVs locations. This process continues until the optimal solution is reached. \vspace{-0.2cm}
\section{Numerical Results}\label{sec:Results}\vspace{-0.1cm}

We consider a rectangular $1000$\,m$\times 500$\,m area. The area is divided into two equal subareas and contains two UAVs. We also consider a hotspot area in which users are distributed according to a \textcolor{black}{truncated Gaussian distributions} around a center with $({\mu _x} =  - 100{\rm{\,m }},{\mu _y} = 100{\rm{\,m }})$, and a density ${\rho _x} = {\rho _y} = \rho$. Hereafter, we denote the UAV closer to the hotspot center by UAV$_{1}$, and the second UAV by UAV$_{2}$. In our analytical analysis, we set $N = 200$, $\beta  = 1$\,Mbps, and ${B_i} = 50$\,MHz. Moreover, we consider a dense urban environment in which $C=11.9$, $D=0.13$, and $\eta=100$ \cite{HouraniModeling}. %\textcolor{blue}{Note that, our results can also be extended for a case with higher number of UAVs, however, the complexity of finding optimal cell boundaries increases according to Lemma 1.}

Figure \ref{fig: P_density} shows the average transmit power of UAVs versus the density of the users for the optimal cell boundaries and the Voronoi cell boundaries. Note that, in Figure \ref{fig: P_density}, we assume that the UAVs are located at the center of the subareas\textcolor{black}{, and their altitude is 200\,m.} As we can see, the average transmit power for optimal cell boundaries is significantly lower than the Voronoi case. According to Figure \ref{fig: P_density}, the average transmit power is around 0.3\,W and 0.12\,W, respectively, for the Voronoi and the proposed optimal cell boundary cases. Furthermore, the Voronoi case is more sensitive to the users' density compared to the optimal cell boundaries. This is due to the fact that the optimal cell boundaries are determined based on the users' density such that the transmit power is minimized. However, in the Voronoi case, the cell boundaries are set without considering the users’ density. As observed in Figure  \ref{fig: P_density}, for the low user density case in which the users are more spread over the area, the performance of Voronoi and optimal cell boundaries are close. However, as the density increases, the proposed optimal case becomes better but then they get close again. The reason is that, for a very highly dense scenario, most users are located around the hotspot center and they are served  by the closest UAV. As a result, the average channel gain is high for the users and thus, power efficiency for the Voronoi case is improved. As we see from Figure \ref{fig: P_density}, for $\rho  \approx 0.02$,  the proposed approach yields a maximum power improvement over the Voronoi case. 

\begin{figure}[!t]
  \begin{center}
   \vspace{-0.2cm}
    \includegraphics[width=8.5cm]{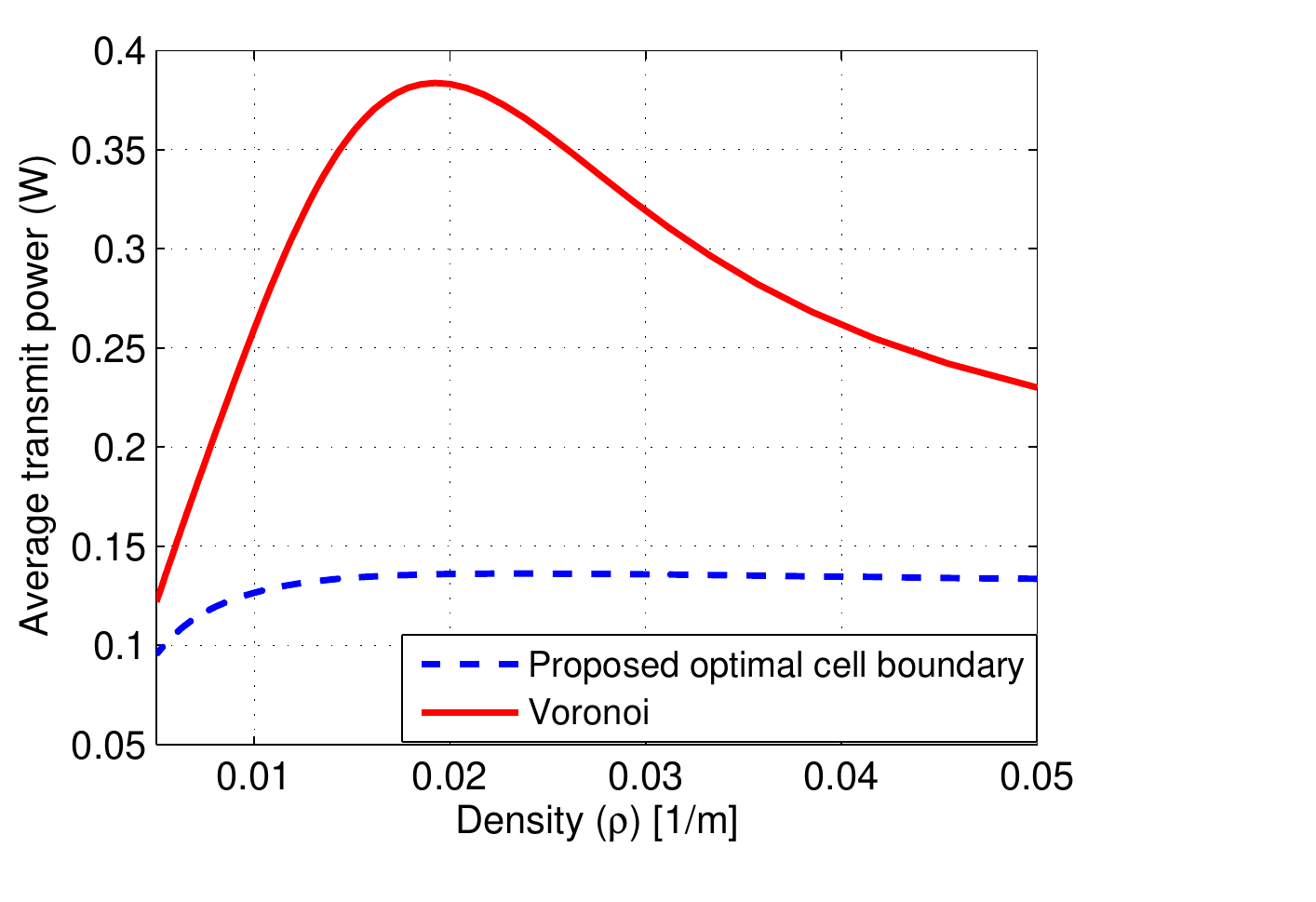}
    \vspace{-0.55cm}
    \caption{  \small Average required transmit power versus users' density. \vspace{-.2cm}}
    \label{fig: P_density}
  \end{center}\vspace{-0.2cm}
\end{figure}

Figure \ref{fig: P_H} shows the impact of UAV altitude on the average transmit power for optimal cell boundaries \textcolor{black}{with $\rho=0.01$.} In our setup, UAV$_1$ is closer to the hotspot center than UAV$_2$. Figure \ref{fig: P_H} shows that, the total average transmit power is minimum at an altitude of 400\,m. In fact, the UAVs should not be positioned at very low altitudes, due to high shadowing and a low probability of LOS connections towards the users. On the other hand, at very high altitudes, LOS links exist with a high probability but the large distance between UAV and users results in a high path loss. As shown in Figure \ref{fig: P_H}, the optimal individual altitude for UAV$_1$ and UAV$_2$ are around 320\,m and 500\,m respectively. However, the total transmit power of both UAVs is minimized for $h_1=h_2=400$\,m .

\begin{figure}[!t]
  \begin{center}
   \vspace{-0.2cm}
    \includegraphics[width=8.3cm]{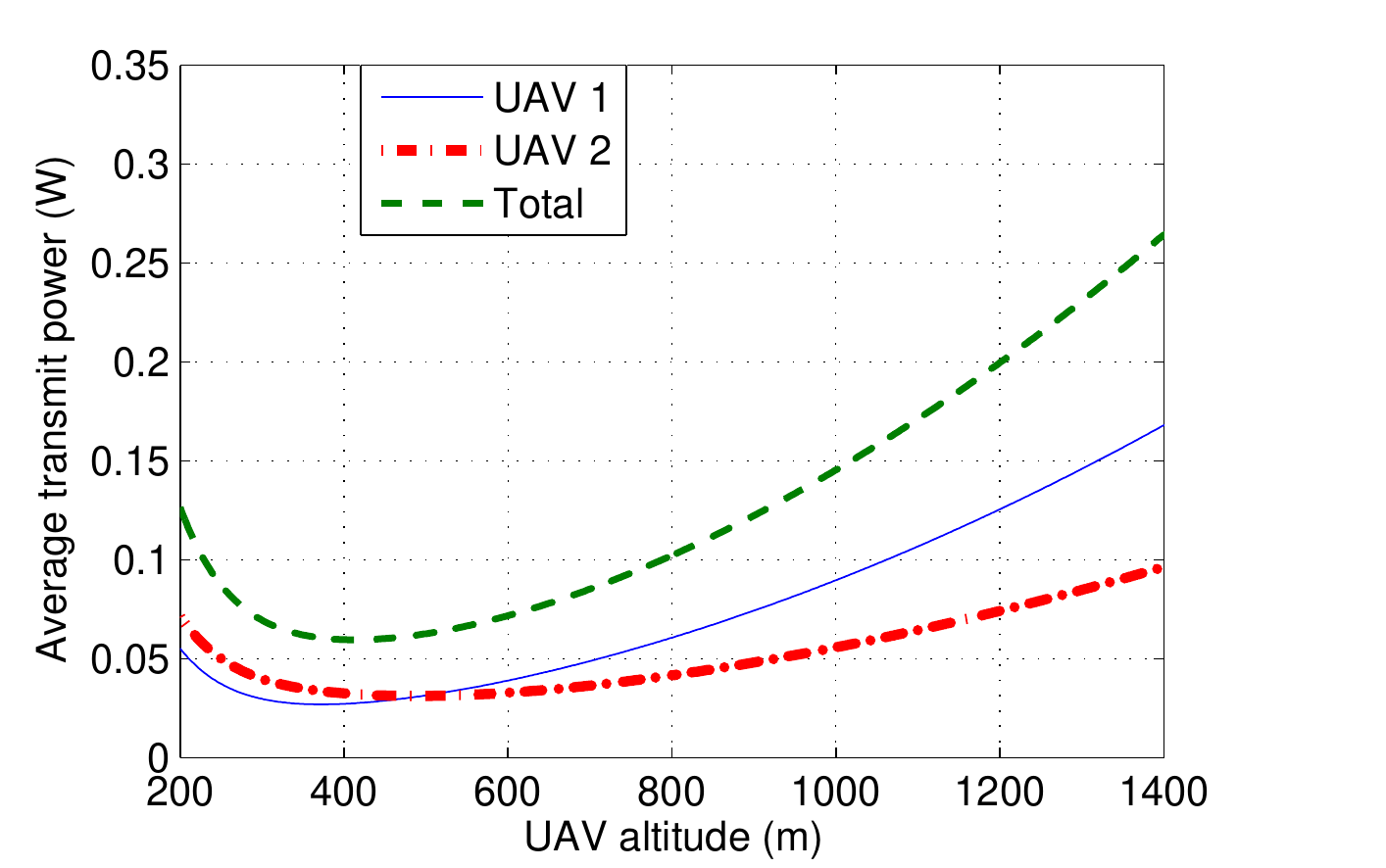}
    \vspace{-0.3cm}
    \caption{  \small Average required transmit power versus UAVs altitude. \vspace{-0.3cm}}
    \label{fig: P_H}
  \end{center}\vspace{-0.5cm}
\end{figure}

Figure \ref{fig: P_H12} illustrates the inverse of the total average transmit power as a function of altitude for the optimal cell association case. Note that, in this figure, we used inverse of power solely for a better illustration of the results and for clarity of the figure. Here, we consider all the possible combinations of  and  from 200\,m to 1200\,m. As seen from Figure \ref{fig: P_H12}, the minimum total average transmit power (maximum inverse of power) is about 0.12\,W and it is achieved for $h_1=310$\,m and $h_2=530$\,m. Note that, since the hotspot center is closer to UAV$_1$, on the average, this UAV has a higher chance of LOS links to users compared to UAV$_2$. Hence, UAV$_2$ should be at a higher altitude in order to improve its channel condition (more LOS links) to the users.   

\begin{figure}[!t]
  \begin{center}
   \vspace{-0.2cm}
    \includegraphics[width=7.5cm]{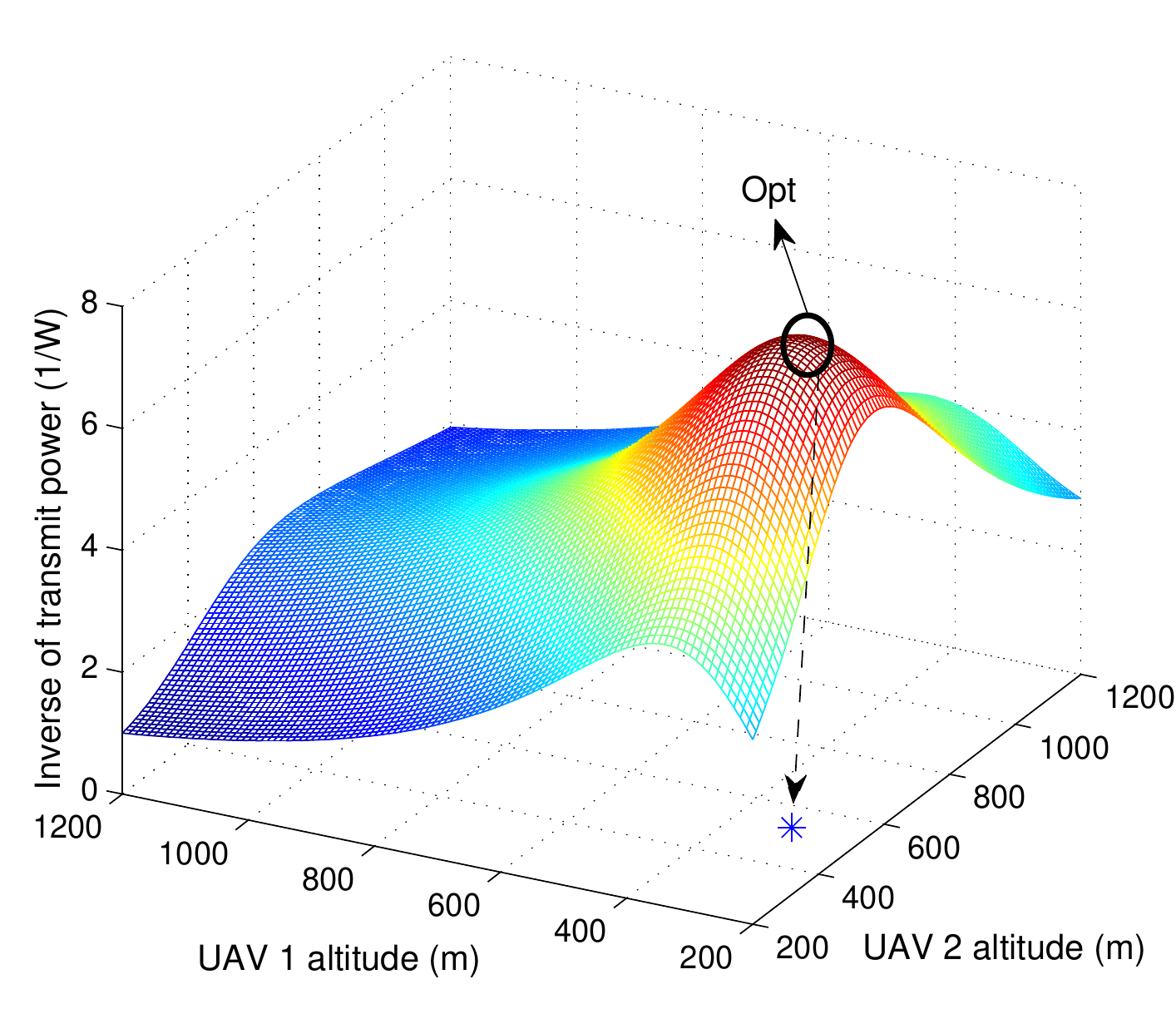}
    \vspace{-0.3cm}
    \caption{\small Inverse of average transmit power versus altitude. \vspace{-0.35cm}}
    \label{fig: P_H12}
  \end{center}\vspace{-0.1cm}
\end{figure}

%As an illustrative example of moving the UAVs to their optimal locations, considering two UAVs, Figure 5 depicts the locations and optimal cell boundaries of the UAVs that lead to a minimum transmit power. As expected, the UAVs are positioned close to the hotspot center. In this case, considering ${\sigma _x} = {\sigma _x} = 50$  and ${h_1} = {h_2} = 300$ m, the optimal location of  UAV\textsubscript{1} and  UAV\textsubscript{2} for the second scenario are $( - 105  \rm{ m},105 \rm{ m})$  and $(20 \rm{ m},105  \rm{ m})$  respectively.
%
%\begin{figure*}[!h]
%\begin{center}
%\begin{subfigure}{0.48\textwidth}
%\centering
%\includegraphics[width=7.2cm]{./Plots/Cell_boundary1.pdf}\vspace{-0.2cm}
%\caption{\label{fig:Cell_boundary1}Fixed UAVs location in the subarea center}
%\end{subfigure}%\newline
%\begin{subfigure}{.48\textwidth}
%\centering
%\includegraphics[width=7.2cm]{./Plots/Cell_boundary2.pdf}\vspace{-0.2cm}
%\caption{\label{fig:Cell_boundary2}Optimal UAVs location}
%\end{subfigure}
%\caption{{\label{fig:Cell_boundary} Cell boundaries and location of UAVs}\vspace{0.8cm}}
%\end{center}
%\vspace{-0.8cm}
%\end{figure*}

Figure \ref{fig:P_optimal_comparison}  shows the average required transmit power for optimal cell association, optimal UAVs' location, and combined optimal cell association and UAVs location cases. As expected, the combined cell-association/UAV-location outperforms the UAV-location and cell association cases by factor of 3 and 10 respectively. Moreover, considering Figure \ref{fig: P_density}, the proposed combined method improves the power efficiency by factor of 20 compared to the Voronoi case. In the combined case, the mobile UAVs move to their optimal locations based on the users' distribution. As a result, the UAVs need  lower transmit power to satisfy the users’ rate requirement. Moreover, once the UAVs reach the new location, the optimal cell boundaries are updated and further improvement in the power efficiency is achieved.\vspace{-0.3cm}

\begin{figure}[!t]
  \begin{center}
   \vspace{0.012cm}
    \includegraphics[width=8cm]{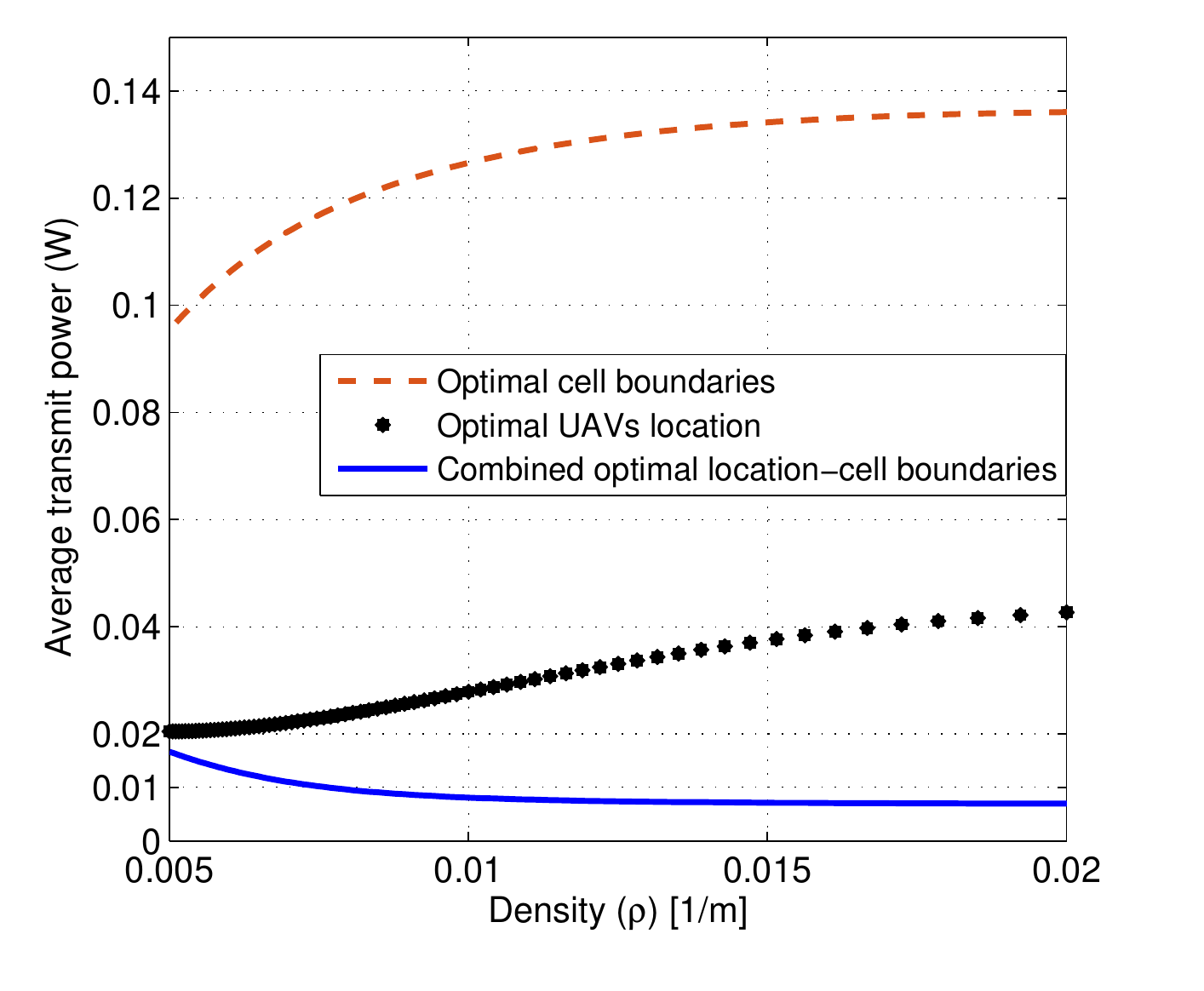}
    \vspace{-0.5cm}
    \caption{ \small  Average required transmit power versus users density. \vspace{-0.1cm}}
    \label{fig:P_optimal_comparison}
  \end{center}\vspace{-0.6cm}
\end{figure}

%\begin{table}[t]
%\captionof{table}{Parameters in numerical analysis}
%\begin{center}
%\begin{tabular}[b]{|c|c|} %[!t]%to 0.4\textwidth { | X[l] | X[c] |}
%%\caption{Parameters in numerical analysis}
% \hline
% \textbf{Parameters} &  \textbf{ Value}\\
% \hline
%${f_{\text{c}}}$  & 2 GHz  \\
%\hline
%${\xi _{{\text{LoS}}}}$  & 1 dB  \\
%\hline
%${\xi _{{\text{LoS}}}}$  & 20 dB  \\
%\hline
%$N{\text{(200 KHz bandwith)}}$ & -120 dBm  \\
%\hline
%$\alpha$ & 9.6  \\
%\hline
%$\beta$ & 0.28  \\
%\hline
%length of area ($a$)  & 2000 m  \\
%\hline
%${\gamma _{{\text{th}}}}$  & 10 dB  \\
%\hline
%\end{tabular}\vspace{-0.8cm}
%\end{center}
%\end{table}

\section{Conclusions}\label{sec:Results}\vspace{-0.15cm}

In this paper, we have proposed a novel, optimal deployment framework for deploying UAVs that act as flying base stations. We have cast the problem as a power minimization problem under the constraint of satisfying the rate requirement for all ground users. To this end, we have applied optimal transport theory to obtain the optimal cell association, and we have derived optimal UAVs' locations using the facility location framework. Moreover, we have investigated the impact of UAVs altitude on the power efficiency. The results have shown that the total required transmit power is significantly decreased by determining the optimal coverage regions for UAVs. Furthermore, finding the optimal UAVs locations based on the users' distribution, and adjusting their altitudes to optimal values will yield the minimum power consumption. \vspace{-0.4cm}

%\clearpage
\def\baselinestretch{1.03}
\bibliographystyle{IEEEtran}
\bibliography{references}
\end{document}